\newtheorem{prop}{Proposition}
\newtheorem*{problem}{Problem}
\newtheorem*{modelloc}{$BGS$ model for the local part}
\newtheorem*{modellocgen}{Complete model for the local part}
\newcommand{\sgn}{\operatorname{sgn}}
\begin{document}

\title{The local content of all pure two-qubit states}% Force line
                                 % breaks with \\

% \altaffiliation[Also at ]{Physics Department, XYZ University.}%Lines break
% automatically or can be forced with \\
\author{Samuel Portmann}
%\email{samuel.portmann@gmail.com} 
\affiliation{Group of Applied Physics, University of Geneva, Chemin de Pinchat 22, CH-1211 Geneva 4, Switzerland}

\author{Cyril Branciard}
\affiliation{School of Mathematics and Physics, The University of Queensland, St Lucia, QLD 4072, Australia}

\author{Nicolas Gisin}
\affiliation{Group of Applied Physics, University of Geneva, Chemin de Pinchat 22, CH-1211 Geneva 4, Switzerland}

% \homepage{http://www-itp.unibe.ch}

\date{\today}% It is always \today, today,
              %  but any date may be explicitly specified

\begin{abstract}
The \emph{(non-)local content} in the sense of  Elitzur, Popescu, and Rohrlich (EPR2) [Phys.~Lett.~A {\bf 162}, 25 (1992)] is a natural measure for the (non-)locality of quantum states. Its computation is in general difficult, even in low dimensions, and is one of the few open questions about pure two-qubit states. We present a complete solution to this long-lasting problem.
\end{abstract}

%\pacs{03.65.Ud, 03.65.Ta, 03.67.-a, 03.67.Hk, 03.67.Mn}% PACS, the Physics and Astronomy
                              % Classification Scheme.
%\keywords{}%Use showkeys class option if
%keyword display desired

\maketitle
\thispagestyle{empty} 
John Bell showed in 1964~\cite{bell64} that the ``classical'' metaphysical assumptions used by Einstein, Podolsky and Rosen (EPR)~\cite{epr,footnote.einstein} have empirical consequences in the form of a constraint on correlations of quantum measurement outcomes: the Bell inequalities. Bell later subsumed these assumptions into the notion of \emph{local causality}~\cite{bell76}, today commonly abbreviated to just \emph{locality}. The experiments done so far give strong evidence that Bell inequalities are violated as predicted by quantum mechanics~\cite{aspect99} and thus that there can not be \emph{any} local model for the corresponding correlations: the world is in this sense nonlocal. 

Bell's theorem is a metatheoretical result of, at first sight, rather philosophical concern. That it can be of actual practical interest came as a surprise~\cite{footnote.bell}. Starting with an intuition of Ekert in 1991~\cite{ekert91}, it was established later on that nonlocal correlations can be made the basis of so-called \emph{device independent quantum information processing}~\cite{barrett05,acin07,pironio09,pironio10,colbeck11}. There are many examples in the history of physics where a theoretical result is used for applications not foreseen by the pioneers, but Bell's theorem with its foundational character might be the most striking yet. 

% In the meantime there has been progress in the analysis of quantum mechanics with the help of classical concepts way beyond Bell's theorem. In one of the simplest and best studied cases, namely correlations of pure two-qubit states which concern us in this letter, most answers to the most natural questions are known. E.g.~the fact that its classical simulation is surprisingly cheap~\cite{toner03}: One bit of communication between the two parties is enough for the maximally entangled state and two bits for all partially entangled states---whether one bit would be already enough is however not known. Another of the very few open problems, which we solve completely in this letter, concerns the so-called \emph{local content} of these states. 

Whereas the violation of a Bell inequality implies nonlocality, it does not quantify it. To take the amount of the violation as a measure is problematic. E.g.,~the state which violates it maximally depends on the particular inequality~\cite{liang11}. A more natural measure of nonlocality, proposed three times independently~\cite{maudlin92,brassard99,steiner00}, quantifies it in terms of \emph{communication complexity}~\cite{footnote.commcompl}: how many bits of communication are minimally needed for the simulation of quantum states. For the simplest case of pure two-qubit states most answers are known. Surprisingly, as shown by Toner and Bacon~\cite{toner03}, a single bit is already enough for the maximally entangled state and two bits for all partially entangled ones. The main open question is whether one bit is already enough even for partial entanglement or whether the two bits of \cite{toner03} are necessary. Whatever the answer to this question, communication complexity is clearly a very coarse measure for the nonlocality of these states. In that respect a more promising approach starts with the so-called \emph{local content}.  

\begin{figure}[htbp]
\centering
\epsfig{file=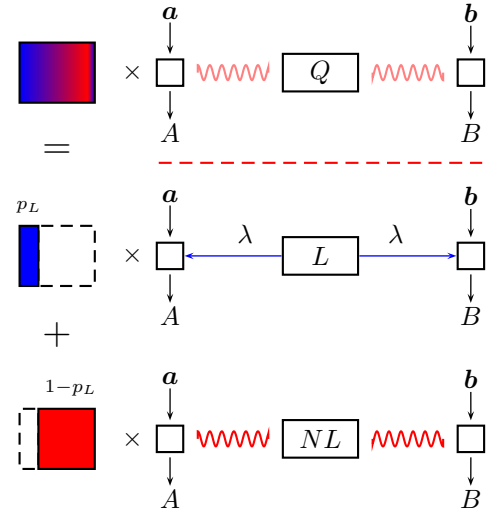}
\caption{An EPR2 decomposition holds for all measurement directions $\bm{a}$ and $\bm{b}$ and measurement outcomes $A$ and $B$. The red wavy lines indicate nonlocality. The nonlocal part is more nonlocal than the decomposed quantum probability distribution. The outcomes of the local part are determined by $\lambda$ and the corresponding measurement direction.}
\label{boxepr2}
\end{figure}

Consider an experiment where one measures the probability distribution $P_Q(A,B|\bm{a},\bm{b})$ of the measurement outcomes $A$ and $B$ for measurements $\bm{a}$ and $\bm{b}$ by preparing repeatably particle pairs in the same quantum state. If $P_Q$ comes from a nonlocal state the whole ensemble of particle pairs cannot be described with a local model. However, as first observed by Elitzur, Popescu, and Rohrlich (EPR2)~\cite{elitzur92}, there are states for which it is possible to regard a fraction $p_L \in ]0,1]$ of the particle pairs as behaving locally, and the completing fraction $1-p_L$ as behaving nonlocally. We will then say that there exists an \emph{EPR2 decomposition} into a local part $P_L$ with weight $p_L$ and a nonlocal part $P_{NL}$ with weight $1-p_{L}$ (see figure~\ref{boxepr2}):
\begin{align}\label{epr2}
P_Q=p_L P_L+(1-p_L)P_{NL}, 
\end{align} 
where the decomposition holds for all von Neumann measurements and all outcomes.
EPR2 decompositions are in general not unique. What is unique is the $\emph{local content}$ $p_{LC}$ which is defined as the maximal value $p_L$ can take over all EPR2-decompositions. We will refer to a decomposition with $p_L=p_{LC}$ as a \emph{maximal EPR2 decomposition} and define the \emph{nonlocality} of a state (or its nonlocal content) as $1-p_{LC}$. This nonlocality measure can be defined for any probability distribution. For instance, the nonlocality of the nonlocal part of a maximal EPR2 decomposition is $1$ and such a nonlocal part is thus maximally nonlocal. We also note that the nonlocal part of any EPR2 decomposition of a quantum state is always more nonlocal than or equally nonlocal as the quantum state it decomposes (see figure~\ref{boxepr2}) and is in general not quantum; but it is still non-signaling, since both $P_Q$ and $P_L$ are. 

The local content of pure two-qubit states was already investigated by EPR2. By making appropriate choices of the basis, all these states can be written as
\begin{align}\label{state}
|\Psi(\theta)\rangle=\cos\left(\frac{\theta}{2}\right)|00\rangle+\sin\left(\frac{\theta}{2}\right)|11\rangle
\end{align} 
with $\theta\in [0,\pi /2]$. All entangled states ($\theta>0$) are nonlocal~\cite{capasso73,gisin91}. In spite of this, EPR2 found in 1992 an explicit EPR2-decomposition with local weight $\frac{1}{4}(1-\sin\theta)$ for all these states, thus giving a {\it lower bound} on their local content. Furthermore, they gave an argument that the maximally entangled state has a local content of zero and is thus fully nonlocal. Further progress was achieved only in 2006 by Barrett, Kent, and Pironio~\cite{barrett06}. Since the local part necessarily obeys Bell inequalities, this places constraints on any EPR2 decomposition and can be used for deriving an {\it upper bound} on the local content. With this idea, and using a chained Bell inequality~\cite{pearle70,braunstein90}, they were able to rederive (without a technical assumption used by EPR2) the result that maximally entangled states are fully nonlocal~\cite{footnote.barrett}. Next, Scarani~\cite{scarani08} narrowed the set of possible values of $p_{LC}$ further down: He derived both a new lower bound of $1-\sin\theta$, and numerically, using again the same chained Bell inequality as~\cite{barrett06}, an upper bound of $\cos\theta$; he conjectured in the preprint version~\cite{scarani07} of~\cite{scarani08} that this equals the local content, i.e. that $p_{LC}=\cos \theta$. This conjecture was proven to be correct by Branciard, Gisin, and Scarani~\cite{branciard102} for the states with $\cos\theta \in [0,\frac{4}{5}]$. 

The main result of this article is an explicit model for the local part which generalizes the model of~\cite{branciard102} and leads to a valid EPR2 decomposition with local weight $p_L=\cos\theta$ for all states~\eqref{state}. Since $\cos\theta$ is an upper bound for the local content (which we prove analytically in Appendix~\ref{analytic}), this implies that $p_{LC}=\cos\theta$ and thus proves Scarani's conjecture for all pure two-qubit states.

\paragraph{The problem}
We will parametrize the von Neumann measurements of Alice and Bob by two three-dimensional unit vectors $\bm{a}$ and $\bm{b}$ on the Bloch sphere, in the form~\cite{footnote.orientation}
\begin{align}
\bm{a}=\begin{pmatrix} a_{\perp}\cos\alpha \\
a_{\perp}\sin\alpha \\
a_z
\end{pmatrix}, \qquad
\bm{b}=\begin{pmatrix} b_{\perp}\cos\beta \\
b_{\perp}\sin\beta \\
b_z
\end{pmatrix}.
\end{align}
With the definitions $s:=\sin \theta$, $c:=\cos\theta$, and $\chi:=\alpha-\beta$, the expectation values of Alice and Bob's measurement outcomes $A,B$ $\in\{+1,-1\}$ for the states~\eqref{state} are 
% \begin{align}
% \langle AB|\bm{a},\bm{b}\rangle_{Q}:&=\langle\Psi(\theta)|\left(\bm{\sigma}^A\cdot
%   \bm{a}\right)\otimes\left(\bm{\sigma}^B\cdot
%   \bm{b}\right)|\Psi(\theta)\rangle, \\
% \langle A|\bm{a}\rangle_{Q} :&=
% \langle\Psi(\theta)|(\bm{\sigma}^A\cdot\bm{a})\otimes \mathbbm{1}|\Psi(\theta)\rangle, \\
% \langle B|\bm{b}\rangle_{Q}:&=\langle\Psi(\theta)|\mathbbm{1}\otimes (\bm{\sigma}^B\cdot\bm{b})|\Psi(\theta)\rangle .
% \end{align}
% For later convenience, we define that the Pauli matrices of Alice and Bob relate as $\sigma_2^A=-\sigma_2^B$ (think of it as Alice' and Bob's imaginary units having opposite signs).
%
%  With the notation $s:=\sin\theta$, $c:=\cos\theta$,
% \begin{align}
% \begin{pmatrix}
% a_{\perp}\cos\alpha \\
% a_{\perp}\sin\alpha \\
% a_z
% \end{pmatrix}:=\bm{a} \mbox{, }
% \begin{pmatrix}
% b_{\perp}\cos\beta \\
% b_{\perp}\sin\beta \\
% b_z
% \end{pmatrix}:=\bm{b},
% \end{align}
% $a_{\perp}:=\sqrt{1-a_z^2}$, $b_{\perp}:=\sqrt{1-b_z^2}$
% and $\chi :=\alpha-\beta$ one gets~\cite{footnote.pauli}
\begin{align}
\langle AB \rangle_{Q|\bm{a},\bm{b}}&= a_z b_z + s \, a_{\perp}b_{\perp}\cos{\chi}, \label{AB} \\
\langle A \rangle_{Q|\bm{a}}&= c \, a_z, \label{marg1}\\
\langle B \rangle_{Q|\bm{b}}&= c \, b_z.\label{marg2}
\end{align}
%A very convenient formula, which holds for any to random variables $x$ and $y$ taking values in $\{-1,1\}$, expresses the joint probability $P(x,y)$ in terms of the expectation values:
%\begin{align}\label{probexp}
%P(x,y)=\frac{1}{4}\left( 1+x \braket{x}+y\braket{y}+xy\braket{xy}\right).
%\end{align}
%Since quantum probabilities are no-signaling, i.e.~$
%\braket{A|\bm{a},\bm{b}}_Q=\braket{A|\bm{a}}_Q$ and $\braket{B|\bm{a},\bm{b}}_Q=\braket{B|\bm{b}}_Q$,
%the probability of outcomes $A$ and $B$ in directions $\bm{a}$ and $\bm{b}$ is then
%\begin{align}\label{quantumprob}
%P_Q(A,B|\bm{a},\bm{b})=\frac{1}{4}\left( 1+A\langle A\rangle_Q+B\langle B\rangle_Q+AB \langle AB\rangle_Q\right),
%\end{align}
%where we omitted the conditionals for the expectation values.

% In an EPR2 decomposition, \eqref{quantumprob} is written as convex combination of a local and a nonlocal part:

% The local part obeys Bell's local causality condition
% \begin{align}\label{causality}
% P_L(A,B|\bm{a},\bm{b})=\int d\lambda \rho(\lambda) P_L(A|\bm{a},\lambda)P_L(B|\bm{b},\lambda),
% \end{align}
% where the probability distribution $\rho(\lambda)$ of the shared random variable is independent of $\bm{a}$ and $\bm{b}$.
Without loss of generality~\cite{fine822,werner01}, we only consider deterministic models for the local part. That is, we take the outcomes to be functions of some shared random variable $\lambda$ and of the corresponding measurement directions:
\begin{align}\label{cand}
A=A(\bm{a},\lambda), \quad
B=B(\bm{b},\lambda).
\end{align}
Because of locality, $A$ is independent of $\bm{b}$, and $B$ independent of $\bm{a}$. Moreover, the probability density $\rho (\lambda)$ of the shared random variable is assumed to be independent of both measurement directions.
The expectation values of the local part are then given as
\begin{align}
\langle AB \rangle_{L|\bm{a},\bm{b}}&=\int d\lambda \, \rho(\lambda) \, A(\bm{a},\lambda) B(\bm{b},\lambda), \nonumber \\
\langle A \rangle_{L|\bm{a}}&=\int d\lambda \, \rho(\lambda) \, A(\bm{a},\lambda),\nonumber \\
\langle B \rangle_{L|\bm{b}}&=\int d\lambda \, \rho(\lambda) \, B(\bm{b},\lambda).
\end{align}

Ultimately, our aim is to prove that $p_{LC}=c$. Since $p_{LC}\leq c$, as proven numerically in \cite{scarani08} and analytically in Appendix~\ref{analytic}, it is enough to find a model for the local part that admits a local weight $p_L=c$.

Barrett, Kent, and Pironio \cite{barrett06} showed that the maximally entangled state, being fully nonlocal, necessarily has random marginals. Since the nonlocal part of a maximal decomposition of the states \eqref{state} is also fully nonlocal, its marginals have to be random, too (we prove this in Appendix~\ref{marginals}). As can be seen from \eqref{marg1} and \eqref{marg2}, this implies the constraints $\braket{A}_L=a_z$ and $\braket{B}_L=b_z$ for the local part. Note that this agrees with the ansatz used in \cite{branciard102}. 

Once one has found candidate functions~\eqref{cand} for the local part, it remains to be checked, whether the implied nonlocal part $P_{NL} = \frac{P_Q-cP_L}{1-c}$ (assuming $p_L = c$~\cite{footnote.c.in.0.1}) is a probability for all measurement directions and outcomes. Since the nonlocal part has random marginals, this is the case if the correlation term is between $-1$ and $1$.
The problem to be solved reduces to the following task~\cite{branciard102}, for any given value of $c$:
\begin{problem} Find functions $A(\bm{a},\lambda)$ and $B(\bm{b},\lambda)$ taking values in $\{+1,-1\}$, and a probability distribution $\rho(\lambda)$ such that
\begin{align}
\langle A\rangle_{L|\bm{a}}& \ = \ a_z, \label{constr.1}\\
\langle B\rangle_{L|\bm{b}}& \ = \ b_z, \label{constr.2}\\
\left| \langle AB\rangle_{Q|\bm{a},\bm{b}}-c\langle AB\rangle_{L|\bm{a},\bm{b}}\right|& \ \leq \ 1-c \label{constr.3},
\end{align}
hold for all measurement directions $\bm{a}$ and $\bm{b}$.
\end{problem}
Note that this formulation of the problem is redundant, since, as shown in Appendix~\ref{marginals}, constraint \eqref{constr.3} implies \eqref{constr.1} and \eqref{constr.2}.

\paragraph{The model.}
Our starting point is the model for the local part introduced in~\cite{branciard102}, which we will call the BGS model. The complete model below will be a generalization of this model.

\begin{modelloc} Alice and Bob share the random variable $\bm{\lambda}$ which is a random three-dimensional unit vector uniformly distributed over the Bloch sphere. For measurement directions $\bm{a}$ and $\bm{b}$, Alice and Bob output, respectively,
\begin{align}
A(\bm{a},\bm{\lambda})&=\sgn[a_z-\bm{a}\cdot\bm{\lambda}],\label{AL}\\
B(\bm{b},\bm{\lambda})&=\sgn[b_z-\bm{b}\cdot\bm{\lambda}]\label{BL},
\end{align}
where $\sgn$ is the sign function ($\in \{+1,-1\}$).
\end{modelloc}
 This model gives the right marginals (\eqref{constr.1} and \eqref{constr.2}) and the correlation becomes (see~\cite{branciard102} or Eq.~\eqref{EL_BGS} in Appendix~\ref{der.rho} for the explicit form of $E_L^{BGS}(a_z,b_z,\chi)$):
\begin{align}\label{BGS}
\braket{AB}_{L|\bm{a},\bm{b}}^{BGS}=\left\{ \begin{array}{ll}
1-|a_z-b_z|,&\ \mbox{ if }\chi=0, \\
-1+|a_z+b_z|,&\ \mbox{ if }\chi=\pi, \\
E_L^{BGS}(a_z,b_z,\chi),&\ \mbox{ if }0<|\chi|<\pi.
\end{array}\right. 
\end{align}
% with
% \begin{align}
% E_L^{BGS}(a_z,&b_z,\chi ) \nonumber \\
% =&1-\frac{2|\chi|}{\pi}+\frac{2}{\pi}a_z \arctan \left(\frac{a_{\perp}b_z-a_zb_{\perp}\cos\chi}{b_{\perp}\sin|\chi |}\right)\nonumber \\
% &+\frac{2}{\pi}b_z \arctan \left( \frac{a_zb_{\perp}-a_{\perp}b_z\cos\chi}{a_{\perp}\sin|\chi|}\right)
% \end{align}
This model solves the problem partially: as verified numerically in~\cite{branciard102}, \eqref{constr.3} is fulfilled for all states such that $c \in [0,\frac{4}{5}]$.

To get a leading idea for our generalization of the BGS model, we will concentrate on a necessary condition for the local part~\cite{branciard102}. If Alice measures in a direction $\bm{a}$ and gets the outcome $+1$, Bob's qubit collapses to the state $\ket{\bm{a}(1)}$ (the eigenstate of $\bm{a}(1)\cdot \bm{\sigma}$ with eigenvalue $+1$), where $\ket{\bm{a}(1)}$ is one step above $\ket{\bm{a}}$ on the ``Hardy ladder''~\cite{hardy93, boschi97} (see Appendix~\ref{hardy}). The vector $\bm{a}(1)$ is given by:
\begin{align} \label{a1}
\bm{a}(1)=\begin{pmatrix}
a_{\perp}(1) \cos\alpha \\
a_{\perp}(1)\sin\alpha \\
a_z(1)
\end{pmatrix} \mbox{ with }a_z(1):=\frac{a_z+c}{1+ca_z},
\end{align}
and $a_{\perp}(1)=\sqrt{1-a_z(1)^2}$.
If Bob now measures in the direction $\bm{a}(1)$, he will always get the outcome $+1$ and thus $P_Q(+1, -1|\bm{a},\bm{a}(1))=0$.
In a similar way, we also have $P_Q(-1, +1|\bm{a},\bm{a}(-1))=0$, where
\begin{align} \label{am1}
\bm{a}(-1)=\begin{pmatrix}
a_{\perp}(-1) \cos\alpha \\
a_{\perp}(-1)\sin\alpha \\
a_z(-1)
\end{pmatrix} \mbox{ with }a_z(-1):=\frac{a_z-c}{1-ca_z},
\end{align}
and $\bm{a}(-1)$ is one step below $\bm{a}$ on the Hardy ladder. Since $P_Q=p_L P_L+(1-p_L)P_{NL}$, this implies (for $p_L\neq 0$) that the corresponding probabilities for the local part must be zero, too:
\begin{align}
P_L(+1, -1|\bm{a}, \bm{a}(1))&=0, \label{con.1}\\
P_L(-1, +1|\bm{a}, \bm{a}(-1))&=0. \label{con.2}
\end{align}
The BGS model fulfills these conditions for all states, also for the ones which violate constraint \eqref{constr.3} and thus for which the model does not give a valid EPR2 decomposition. To see this, we note for instance that $\braket{A}_{L|\bm{a}}^{BGS}= a_z$, $\braket{B}_{L|\bm{a}(1)}^{BGS}= a_z(1)$, and that, since $\bm{a}$ and $\bm{a}(1)$ have the same azimuth angle (i.e., $\chi=0$) and $a_z(1)\geq a_z$,
\begin{align}
\braket{AB}_{L|\bm{a},\bm{a}(1)}^{BGS}=1-|a_z-a_z(1)|=1+a_z-a_z(1).
\end{align} 
Hence, one indeed has
\begin{align}
&P_L^{BGS}(+1, -1|\bm{a}, \bm{a}(1))\nonumber \\
&=\frac{1}{4}\left(1+\braket{A}_{L|\bm{a}}^{BGS}-\braket{B}_{L|\bm{a}(1)}^{BGS}-\braket{AB}_{L|\bm{a},\bm{a}(1)}^{BGS}\right)=0. \
\end{align}

Let us now see in what way we can alter Alice's output function~\eqref{AL} in the BGS model, such that conditions \eqref{con.1} and \eqref{con.2} still hold (we ignore condition \eqref{constr.1} for a moment). Assume that Bob measures in direction $\bm{b}=\bm{a}(1)$: Condition (\ref{con.1}) still holds if Alice chooses to output $A(\bm{a}(t),\bm{\lambda})$ with
\begin{align}
\bm{a}(t)=\begin{pmatrix} a_{\perp}(t)\cos\alpha \\
a_{\perp}(t)\sin\alpha \\
a_z(t)
\end{pmatrix},
\end{align}
and $-1\leq a_z(t)\leq a_z(1)$. Similarly, assume that Bob measures in direction $\bm{b}=\bm{a}(-1)$: Condition (\ref{con.2}) still holds if Alice chooses to output $A(\bm{a}(t),\bm{\lambda})$ with $a_z(-1)\leq a_z(t)\leq 1$. Taken together, since Bob could be measuring in either direction $\bm{a}(1)$ or $\bm{a}(-1)$ about which Alice has no information, Alice can choose to output $A(\bm{a}(t),\bm{\lambda})$ with $a_z(-1)\leq a_z(t)\leq a_z(1)$ and still be sure that conditions (\ref{con.1}) and (\ref{con.2}) hold. We will say that Alice is free to choose $\bm{a}(t)$ in the {\it Hardy sector} $\mathcal{H}_{\bm{a}}$ of $\bm{a}$.
% \begin{figure}[htbp]
% \centering
% \epsfig{file=circle.3.eps, scale=1}
% \caption{The shaded region is the Hardy sector $\mathcal{H}_{\bm{a}}$ of the vector $\bm{a}$.}
% \label{circle}
% \end{figure}
To use this wiggle room is precisely the idea for the construction of our $\theta$-dependent generalization of the BGS model. 

Bob's output will be identical to his output \eqref{BL} but Alice will now output $A(\bm{a}(t),\bm{\lambda})$, where $t\in [-1,1]$ parametrizes the vectors in the Hardy sector $\mathcal{H}_{\bm{a}}$, such that $\bm{a}(0)=\bm{a}$, and that $\bm{a}(1)$ and $\bm{a}(-1)$ are the two vectors introduced in~(\ref{a1}--\ref{am1}). Alice picks a particular $\bm{a}(t)$ depending on a local (nonshared) random variable $t$, which has a probability distribution $\rho_{\bm a}(t)$ that depends on $a_z$ (besides depending on $\theta$)~\cite{footnote.colbeck}. One obvious constraint on $\rho_{\bm a}(t)$ comes from the condition that \eqref{constr.1} should still be fulfilled. Note that for $\chi=0$, the model implies the same correlation term \eqref{BGS} as the BGS model if $\bm{b}$ does \emph{not} lie in the Hardy sector $\mathcal{H}_{\bm{a}}$ of $\bm{a}$. Incidentally, for such measurement directions, one can show (see Appendix~\ref{app.BGS}) that the BGS model does lead to a valid EPR2 decomposition for all values of $\theta$, as it should be for the following generalization having a chance to be valid for all states and all measurement directions.

\begin{modellocgen} Alice and Bob share the random three-dimensional vector $\bm{\lambda}$, uniformly distributed on the Bloch sphere. Alice can sample an additional local random variable $t \in [-1,1]$ according to the following distribution, which depends on her setting $\bm{a}$ through its component $a_z$:
\begin{align}
%\rho_{\bm a}(t):=(1+G(t) a_z) \frac{s \ln (\gamma) (\gamma^t+1)}{8 c\gamma^{\frac{t}{2}}},
&\rho_{\bm a}(t):= \, \frac{s\ln(\gamma)}{4c} \, \frac{1+G(t) a_z}{\sqrt{1-G(t)^2}} \, , \\[1mm]
%\end{align}
%\begin{align}
&\mbox{ with } \quad \gamma:=\frac{1+c}{1-c}, \quad G(t):=\frac{\gamma^t-1}{\gamma^t+1}.
\end{align}

Alice and Bob's outputs are then defined as
\begin{align}
A(\bm{a},\bm{\lambda},t)&=\sgn[a_z(t)-\bm{a}(t)\cdot\bm{\lambda}],\\
B(\bm{b},\bm{\lambda})&=\sgn[b_z-\bm{b}\cdot\bm{\lambda}],
\end{align}
where
\begin{align}
\bm{a}(t):=\begin{pmatrix} a_{\perp}(t)\cos\alpha \\
a_{\perp}(t)\sin\alpha \\
a_z(t)
\end{pmatrix} \mbox{ with } \ a_z(t):=\frac{a_z+G(t)}{1+G(t)a_z},
\end{align}
and $a_{\perp}(t) = \sqrt{1-a_z(t)^2}$.
\end{modellocgen}
%Note that the complete model reduces to the BGS model for $\theta =\frac{\pi}{2}$.

The parametrization of the vectors in the Hardy sectors $\mathcal{H}_{\bm{a}}$ as $\bm{a}(t)$ is motivated in Appendix~\ref{hardy}. The probability distribution $\rho_{\bm a} (t)$ is derived in Appendix~\ref{der.rho} from the postulate that constraint \eqref{constr.3} is saturated for the case that $\bm{b}$ lies in the Hardy sector $\mathcal{H}_{\bm{a}}$ of $\bm{a}$. This, together with the previous remark on the case where $\bm{b}$ does not lie in $\mathcal{H}_{\bm{a}}$, allows us to prove analytically that our local model above defines a valid EPR2 decomposition, with a maximal local weight $p_L = p_{LC} = c$, for all measurement directions $\bm{a},\bm{b}$ with $\chi = 0$. One can show in a similar way that this also works for measurements with $\chi = \pi$.

For measurement settings $\bm{a},\bm{b}$ with $0 < |\chi| < \pi$, we checked numerically with standard numerical optimization tools of a common computational software program that constraint \eqref{constr.3} holds within the expected numerical precision (of absolute order $10^{-10}$). Note that for any $\theta$, $\braket{AB}_{L|\bm{a},\bm{b}}$ depends only on the three parameters $a_z$, $b_z$, and $\chi$.
We also note that the implied nonlocal part is nonquantum for all $c \in \, ]0,1[$ (see Appendix~\ref{nonquantum}).

\paragraph{Discussion.} Our new EPR2 decomposition for two-qubit states, inspired by that of~\cite{branciard102} and based on properties of Hardy ladder vectors, thus allows us to show that the local content of all pure two-qubit states~\eqref{state} equals $c = \cos\theta$. This proves a conjecture formulated in~\cite{scarani07}. Interestingly, the nonlocality $1-c$ of these states is found to be monotonously related to its entanglement --- against the trend of results which suggest that entanglement and nonlocality are of a rather different nature (see e.g.~\cite{brunner05,methot07,junge11}). 

One obvious next step is to consider more general states. For the case of higher dimensional partially entangled two-partite states, we only know of the numerical lower bound on $p_{LC}$ found by Scarani~\cite{scarani08} for a class of entangled qutrits. We believe that the methods presented in appendices \ref{analytic} and \ref{marginals} should be generalizable to get analytical upper bounds which might then pave the way for models of the local part. Whether the monotonous connection between entanglement and nonlocality then still holds is a question of particular interest. %Junge and Palazuelos~\cite{} looked at this question using a conceptually different but also very natural nonlocality measure, and showed that the connection does not hold.

%We would like to point out how our model might be used for a classical simulation of partially entangled states. Clearly, any EPR2 decomposition can be used as a starting point for a simulation protocol, where only the nonlocal part remains to be simulated~\cite{brunner08}. 
%However, using a maximal EPR2 decomposition might be especially useful, since there the nonlocal part has random marginals. With that one circumvents the arguably biggest difficulty, namely the simulation of the nonrandom marginals of partially entangled states. It might also be here, where an EPR2 decomposition has its natural role for a nonlocality measure. Namely as communication complexity \emph{on average}, the local part being the fraction where no communication is needed.

We would like to end with one possible application of our decomposition and some ensuing open questions. \emph{Any} EPR2 decomposition can be used as a starting point for a simulation protocol, where only the nonlocal part remains to be simulated~\cite{brunner08}. Now, the arguably biggest difficulty in the simulation of partially entangled states is their nonrandom marginals. Using our maximal EPR2 decomposition seems in that respect ideally suited, since its nonlocal part has random marginals. We also note that nonlocal correlations with random marginals that can be simulated with one bit of classical communication can be simulated with one nonlocal Popescu-Rohrlich (PR) box~\cite{popescu94}, too~\cite{barrett052}. Thus, \emph{if} the nonlocal part of a maximal EPR2 decomposition for two-qubit correlations can be simulated with one bit of communication, then the corresponding quantum state can be simulated with one PR box. This is however known to be impossible for weakly entangled 2-qubit states~\cite{brunner05}; hence their nonlocal part in a maximal EPR2 decomposition cannot be simulated with one bit of communication. Does this answer the open question, whether partially entangled states can be simulate with a single bit? Not yet: Whereas it is always possible to simulate the decomposed state by simulating the nonlocal part, this might not be the most optimal way. 

On the one hand, maximal EPR2 decompositions lead to a natural alternative measure of nonlocality. On the other hand, they thus lead to new insights on the quantification of nonlocality in terms of communication complexity.

% We note also that nonlocal correlations with random marginals that can be simulated with 1 bit of classical communication can also be simulated with 1 nonlocal Popescu-Rohrlich (PR) box~\cite{popescu94,barrett052}. It follows that if the nonlocal part of a maximal EPR2 decomposition can be simulated with 1 bit of communication, then the corresponding quantum state can be simulated with a PR box. This is however known to be impossible for weakly entangled 2-qubit states~\cite{brunner05}; hence their nonlocal part in a maximal EPR2 decomposition cannot be simulated with 1 bit of communication; whether the state itself can be simulated with 1 bit remains however an open question. The study of EPR2 decompositions nonetheless also leads to new insights on the quantification of nonlocality in terms of communication complexity.

%It might also be here, where an EPR2 decomposition has its natural role for a nonlocality measure. Namely as communication complexity \emph{on average}, the local part being the fraction where no communication is needed.

%Independently of such considerations, a conceptual analysis on a more abstract level is desirable in order to make the case for $1-p_{LC}$ as a nonlocality measure more compelling. 

\paragraph{Acknowledgments.} We thank Jean-Daniel Bancal, Tomer Barnea, Nicolas Brunner, Rita Hidalgo Staub, Yeong-Cherng Liang, Charles Ci Wen Lim, Fabio Molo, Carlos Palazuelos, Valerio Scarani, Tim R\"az, and Denis Rosset for discussions. This work was supported by the Swiss NCCR-QSIT, the European ERC-AG QORE, and by a UQ Postdoctoral Research Fellowship.

\bibliography{epr2}% Produces the bibliography via BibTeX.

\appendix

\section{Hardy ladder vectors}\label{hardy}
The vectors $\bm{a}(1)$ and $\bm{a}(-1)$, one step above and below $\bm{a} = (a_{\perp}\cos\alpha,a_{\perp}\sin\alpha,a_z)$ on the Hardy ladder, are defined such that
$P_Q(+1,-1|\bm{a},\bm{a}(1))=0$ and $P_Q(-1,+1|\bm{a},\bm{a}(-1))=0$ respectively, and given by
\begin{align}
\bm{a}(1)=\begin{pmatrix}
a_{\perp}(1)\cos\alpha \\
a_{\perp}(1)\sin\alpha \\
a_z(1)
\end{pmatrix} \quad \mbox{with } a_z(1)=\frac{a_z+c}{ca_z+1},
\end{align}
and 
\begin{align}
\bm{a}(-1)=\begin{pmatrix}
a_{\perp}(-1)\cos\alpha \\
a_{\perp}(-1)\sin\alpha \\
a_z(-1)
\end{pmatrix} \quad \mbox{with } a_z(-1)=\frac{a_z-c}{-ca_z+1}.
\end{align} 
The successive steps of the Hardy ladder are iteratively defined as
\begin{align}
P_Q(+1,-1|\bm{a}(n),\bm{a}(n+1))&=0,\quad \mbox{and} \\
P_Q(-1,+1|\bm{a}(-n),\bm{a}(-n-1))&=0
\end{align}
for all $n\in \mathbb{N}$. To get the explicit form of $\bm{a}(n)$ for all $n\in \mathbb{Z}$, we note that the maps 
\begin{align}
a_z  \mapsto \frac{a_z\pm c}{\pm ca_z+1}
\end{align}
are (real) M\"obius transformations. The composition of M\"obius transformations basically reduces to the product of their associated matrices, in our case to the product of
\begin{align}\label{mob}
\left( \begin{array}{cc}
1& \pm c \\
\pm c & 1
\end{array}\right).
\end{align}

Defining (we assume $c<1$)
\begin{align}
\gamma:=\frac{1+c}{1-c} \geq 1, \quad G(n):=\frac{\gamma^n-1}{\gamma^n+1} \in [-1,1] ,
\end{align}
one gets the explicit form of $\bm{a}(n)$ as
\begin{align} \label{hardyvec}
\bm{a}(n)=\begin{pmatrix}
a_{\perp}(n) \cos\alpha \\
a_{\perp}(n)\sin\alpha \\
a_z(n)
\end{pmatrix} \mbox{ with } a_z(n):=\frac{a_z+G(n)}{G(n)a_z+1},
\end{align}
and $a_{\perp}(n) = \sqrt{1-a_z(n)^2}$.
This can independently be checked e.g.~by induction.

\medskip

In order to get a parametrization of the vectors ``between'' $\bm{a}(-1)$ and $\bm{a}(1)$, we let $n$ take real values $t$ in the interval $[-1,1]$. Note that $a_z(t)$ is a monotonically increasing function. We then define for every vector $\bm{a}$ its Hardy sector $\mathcal{H}_{\bm{a}}$ to be the set of all vectors which can be written as $\bm{a}(t)$ for some $t\in[-1,1]$:
\begin{align}
\mathcal{H}_{\bm{a}}:=\{\bm{a}(t)|t\in [-1,1]\}.
\end{align}

% With
% \begin{align}
% \bm{a}(1)=\begin{pmatrix}
% a_{\perp}(1)\cos\alpha \\
% a_{\perp}(1)\sin\alpha \\
% a_z(1)
% \end{pmatrix}, \quad \left( a_z(1)=\frac{a_z+c}{1+ca_z}\right),
% \end{align}
% one step above, and 
% \begin{align}
% \bm{a}(-1)=\begin{pmatrix}
% a_{\perp}(-1)\cos\alpha \\
% a_{\perp}(-1)\sin\alpha \\
% a_z(-1)
% \end{pmatrix}, \quad \left( a_z(-1)=\frac{a_z-c}{1-ca_z}\right),
% \end{align}
% below $\bm{a}$ on the Hardy ladder, let us define for every integer $n$ the vector $\bm{a}(n)$ as the one which can be reached after $|n|$ steps above (below) $\bm{a}$ for positive (negative) $n$.
% Defining (we assume $c<1$)
% \begin{align}
% \gamma:=\frac{1+c}{1-c} \geq 1, \quad G(n):=\frac{\gamma^n-1}{\gamma^n+1} \in [-1,1] ,
% \end{align}
% one can check e.g.~by induction that the explicit form of $\bm{a}(n)$ reads
% \begin{align} \label{hardyvec}
% \bm{a}(n)=\begin{pmatrix}
% a_{\perp}(n) \cos\alpha \\
% a_{\perp}(n)\sin\alpha \\
% a_z(n)
% \end{pmatrix} \mbox{ with } a_z(n):=\frac{a_z+G(n)}{1+G(n)a_z},
% \end{align}
% and $a_{\perp}(n) = \sqrt{1-a_z(n)^2}$.

% \medskip

% In order to get a parametrization of the vectors ``between'' $\bm{a}(-1)$ and $\bm{a}(1)$, we let $n$ take real values $t$ in the interval $[-1,1]$. Note that $a_z(t)$ is a monotonically increasing function. We then define for every vector $\bm{a}$ its Hardy sector $\mathcal{H}_{\bm{a}}$ to be the set of all vectors which can be written as $\bm{a}(t)$ for some $t\in[-1,1]$:
% \begin{align}
% \mathcal{H}_{\bm{a}}:=\{\bm{a}(t)|t\in [-1,1]\}.
% \end{align}

\section{Analytic proof that $c$ is an upper bound for $p_{LC}$ }\label{analytic}

Let us label $N$ measurement directions of Alice as $\bm{a}(-N)$, $\bm{a}(-N+2)$, $\dots$, $\bm{a}(N-2)$ and $N$ directions of Bob as $\bm{b}(-N+1)$, $\bm{b}(-N+3)$, $\dots$, $\bm{b}(N-1)$. For a given probability distribution $P$ with binary outcomes $A,B$, we define as a correlation measure the expression
\begin{align}\label{d.1}
I_N(P)=&P(A_{-N}\neq B_{-N+1})+P(A_{-N+2}\neq B_{-N+1}) +\dots \nonumber \\
&+P(A_{N-2}\neq B_{N-1})+P(A_{-N}=B_{N-1}),
\end{align}
where $P(A_{i}\neq B_{j})$ is a shortcut notation for $P(A\neq B|\bm{a}(i),\bm{b}(j))$.
As shown in \cite{pearle70, braunstein90}, $I_N$ is a chained Bell polynomial, defining the following chained Bell inequality---satisfied by all local correlations:
\begin{align}
I_N(P_L)\geq 1.
\end{align}
Since by linearity,
\begin{align}
I_N(P_Q)=p_L I_N(P_L)+(1-p_L)I_N(P_{NL})
\end{align}
and since $I_N(P_{NL})$ does not have any constraints except that the probabilities are between zero and one and thus $I_N(P_{NL})\geq 0$, $I_N(P_Q)$ is an upper bound for $p_L$~\cite{barrett06}:
\begin{align}\label{bound}
p_L\leq I_N(P_Q).
\end{align}

We show now that the measurement directions of Alice and Bob can be chosen such that
\begin{align}
\lim_{N\rightarrow\infty}I_N(P_Q)=c.
\end{align}
For that, we start with an arbitrary unit vector
\begin{align}\label{d.6}
\bm{v}(0) :=\begin{pmatrix}
v_{\perp}(0) \cos(\omega ) \\
v_{\perp}(0) \sin(\omega )\\
v_z(0)
\end{pmatrix},
\end{align}
for any $v_z(0) \in ]-1,1[$, $v_{\perp}(0) = \sqrt{1-v_z(0)^2}$, and $\omega \in [0,2\pi [$. 
Let us then define, with the notation of~\eqref{hardyvec},
\begin{align}
\bm{v}(n):&=\begin{pmatrix}
v_{\perp}(n) \cos(\omega ) \\
v_{\perp}(n) \sin(\omega )\\
v_z(n)
\end{pmatrix} .
\end{align}
We then choose Alice's settings to be $\bm{a}(n) = \bm{v}(n)$ for $n\in \{-N,-N+2,\dots,N-2\}$, and Bob's settings to be $\bm{b}(n) = \bm{v}(n)$ for $n\in \{-N+1,-N+3,\dots,N-1\}$. An example of the $z$ coordinates of such a set of vectors is drawn in figure \ref{hardy.ladder}, for $N=3$.
Note that we have
\begin{align}\label{lim}
\lim_{n\rightarrow \infty} \bm{v}(n)=\begin{pmatrix}
0\\0\\1 \end{pmatrix} \mbox{ and } 
\lim_{n\rightarrow -\infty} \bm{v}(n)=\begin{pmatrix}
0\\0\\-1 \end{pmatrix}.
\end{align}

\begin{figure}[htbp]
\centering
\epsfig{file=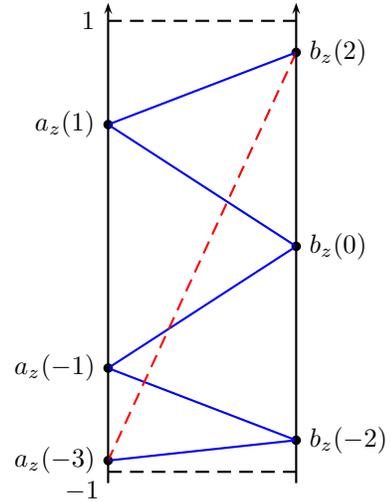}
\caption{Hardy ladder for $N=3$. Every colored line represents one term in the chained Bell inequality, the dashed red one being the last one, $P(A_{-N}=B_{N-1})$.}
\label{hardy.ladder}
\end{figure}

Since 
\begin{align}
P_Q(1, -1|\bm{a}(n), \bm{b}(n+1))=0
\end{align}
for two neighboring Hardy steps $\bm{a}(n) = \bm{v}(n)$ and $\bm{b}(n+1) = \bm{v}(n+1)$, one has
\begin{align}
\langle A_{n}B_{n+1}\rangle_Q := & \ \langle AB \rangle_{Q|\bm{a}(n), \bm{b}(n+1)} \nonumber \\
= & \ 1+c [v_z(n)-v_z(n+1)],
\end{align}
and hence
\begin{align}
P_Q(A_n\neq B_{n+1})&=\frac{1}{2}[1-\langle A_{n}B_{n+1}\rangle_Q] \nonumber \\
&=\frac{c}{2}[v_z(n+1)-v_z(n)].
\end{align}
Similarly,
\begin{align}
P_Q(A_{n+1}\neq B_n)=\frac{c}{2}[v_z(n+1)-v_z(n)].
\end{align}
$I_N(P_Q)$ then becomes
\begin{align}
I&_N(P_Q) \nonumber \\
&=\sum_{n=-N}^{N-2}\frac{c}{2}[v_z(n+1)-v_z(n)]+\frac{1}{2}\left[ 1+\langle A_{-N}B_{N-1}\rangle_Q \right] \nonumber \\
&=\frac{c}{2}[v_z(N-1)-v_z(-N)]+\frac{1}{2}\left[ 1+\langle A_{-N}B_{N-1}\rangle_Q \right].
\end{align}
And because of (\ref{lim}), we have
\begin{align}
\lim_{N\rightarrow \infty}I_N(P_Q)=c,
\end{align}
which leads with \eqref{bound} to
\begin{align}
p_L\leq c.
\end{align}

\section{Proof that $\braket{A}_{NL}=\braket{B}_{NL}=0$ for a maximal EPR2 decomposition of $P_Q$}\label{marginals}
As shown in Appendix~\ref{analytic}, there exist measurement directions for which
\begin{align}
\lim_{N\rightarrow \infty}I_N(P_Q)=c.
\end{align}
Since we have, for a maximal EPR2 decomposition of $P_Q$ (i.e., for $p_{L} = p_{LC} = c$),
\begin{align}
I_N(P_Q)=c I_N(P_L)+(1-c)I_N(P_{NL}),
\end{align}
with $I_N(P_L)\geq 1$ and $I_N(P_{NL})\geq 0$, then we must have
\begin{align}
\lim_{N\rightarrow \infty}I_N(P_L)&=1 \ \mbox{ and } \
\lim_{N\rightarrow \infty}I_N(P_{NL})=0
\end{align}
for such a set of measurement directions.

Note that apart from the last term $P(A_{-N}=B_{N-1})$, a term that is present in $I_N(P_{NL})$ is also present in all subsequent polynomials $I_{N+2k}(P_{NL})$, for $k \in {\mathbb{N}}$.
Since all terms appearing in $I_N(P_{NL})$ are probabilities, and since $\lim_{N\rightarrow\infty}I_N(P_{NL}) = 0$, then all terms have to be zero, except possibly for the last one which must only {\it tend to} zero (as $N\rightarrow\infty$).

Take one term: $P(A_{n}\neq B_{n\pm 1}) = 0$ implies that
\begin{align}
& P(A_{n}=+1,B_{n\pm 1}=-1) \nonumber \\
& \quad = \frac{1}{4}(1+\langle A_{n}\rangle_{NL}-\langle B_{n\pm 1}\rangle_{NL}-\langle A_{n}B_{n\pm 1}\rangle_{NL}) = 0 , \nonumber \\
& P(A_{n}=-1,B_{n\pm 1}=+1) \nonumber \\
& \quad = \frac{1}{4}(1-\langle A_{n}\rangle_{NL}+\langle B_{n\pm 1}\rangle_{NL}-\langle A_{n}B_{n\pm 1}\rangle_{NL}) = 0, \nonumber
\end{align}
and therefore
\begin{align}
& P(A_{n}=+1,B_{n\pm 1}=-1) - P(A_{n}=-1,B_{n\pm 1}=+1) \nonumber \\
& \qquad \qquad = \frac{1}{2}(\langle A_{n}\rangle_{NL}-\langle B_{n\pm 1}\rangle_{NL}) = 0.
\end{align}
Hence, $\langle A_{n}\rangle_{NL} = \langle B_{n\pm 1}\rangle_{NL}$, and by induction $\langle A_{i}\rangle_{NL} = \langle B_{j}\rangle_{NL}$ for any odd value of  $i$ and even value of $j$.

In a similar way, $P(A_{-N}=B_{N-1}) \rightarrow 0$ implies that
\begin{align}
& P(A_{-N}=B_{N-1}=+1) \nonumber \\
& = \frac{1}{4}(1+\langle A_{-N}\rangle_{NL}+\langle B_{N-1}\rangle_{NL}+\langle A_{-N}B_{N-1}\rangle_{NL}) \rightarrow 0 , \nonumber \\
& P(A_{-N}=B_{N-1}=-1) \nonumber \\
& = \frac{1}{4}(1-\langle A_{-N}\rangle_{NL}-\langle B_{N-1}\rangle_{NL}+\langle A_{-N}B_{N-1}\rangle_{NL}) \rightarrow 0, \nonumber
\end{align}
and therefore
\begin{align}
& P(A_{-N}=B_{N-1}=+1) - P(A_{-N}=B_{N-1}=-1) \nonumber \\
& \qquad \qquad = \frac{1}{2}(\langle A_{-N}\rangle_{NL}+\langle B_{N-1}\rangle_{NL}) \rightarrow 0. \label{limit_marg}
\end{align}

Now, from the previous argument, $\langle A_{-N}\rangle_{NL} = \langle B_{N-1}\rangle_{NL} = \langle A_{i}\rangle_{NL} = \langle B_{j}\rangle_{NL}$ for any $i,j$. Eq.~\eqref{limit_marg} then implies that $\langle A_{i}\rangle_{NL} = \langle B_{j}\rangle_{NL} = 0$ for all the measurement directions appearing in $I_{N}(NL)$.

Since we started the construction in Appendix~\ref{analytic} with an arbitrary measurement vector (see equation~\eqref{d.6}), we must have, for the maximal EPR2 decomposition considered here,
\begin{align}
\braket{A}_{NL|\bm{a}}=\braket{B}_{NL|\bm{b}}=0
\end{align}
for all $\bm{a}$ and $\bm{b}$. This in turn implies that the marginals of the local part are, as given in Eq.~(\ref{constr.1}--\ref{constr.2}),
\begin{align}
\langle A\rangle_{L|\bm{a}} \ = \ a_z, \quad \langle B\rangle_{L|\bm{b}} \ = \ b_z.
\end{align}

%\section{Proof that the BGS model gives a valid EPR2 decomposition for measurement settings such that $\chi=0$, and $\bm{b} \notin \mathcal{H}_{\bm{a}}$}\label{app.BGS}
\section{Proof that the BGS model gives a valid EPR2 decomposition for measurement settings such that $\chi=0$, and $b \not\in \mathcal{H}_{a}$}\label{app.BGS}

In this Appendix we prove that when Alice and Bob's settings are such that $\chi=0$, but such that $\bm{b}$ does not lie in the Hardy sector $\mathcal{H}_{\bm{a}}$ of $\bm{a}$ (i.e., $b_z > a_z(1)$ or $b_z < a_z(-1)$), then the BGS EPR2 decomposition~\cite{branciard102} is valid for all states. For that, we will show that~\eqref{constr.3} holds for all values of $c$.

Suppose indeed that $\chi=0$ and $b_z > a_z(1) = \frac{a_z+c}{1+c a_z} \geq a_z$. Then, omitting the conditional settings, we have
\begin{align}
\langle AB\rangle_{Q} - c \langle AB\rangle_{L} = a_z b_z + s a_\perp b_\perp - c(1 - b_z + a_z). \nonumber
\end{align}
Remembering that $b_\perp = \sqrt{1-b_z^2}$, we find
$\frac{\partial^2}{\partial b_z^2}[\langle AB\rangle_{Q} - c \langle AB\rangle_{L}] = -s a_\perp b_\perp^{-3} \leq 0$, 
%\begin{align}
%\frac{\partial^2}{\partial b_z^2}[\langle AB\rangle_{Q} - c \langle AB\rangle_{L}] = -s a_\perp b_\perp^{-3} \leq 0, \nonumber
%\end{align}
and therefore that $\frac{\partial}{\partial b_z}[\langle AB\rangle_{Q} - c \langle AB\rangle_{L}]$ decreases with $b_z$. Now, observing that $\frac{\partial}{\partial b_z}[\langle AB\rangle_{Q} - c \langle AB\rangle_{L}]_{|b_z = a_z(1)} = 0$, we conclude that for $b_z \geq a_z(1)$, $\frac{\partial}{\partial b_z}[\langle AB\rangle_{Q} - c \langle AB\rangle_{L}] \leq 0$. Hence, $\langle AB\rangle_{Q} - c \langle AB\rangle_{L}$ decreases monotonically for $b_z \in [a_z(1),1]$, and therefore takes values between $[\langle AB\rangle_{Q} - c \langle AB\rangle_{L}]_{|b_z = 1} = (1-c)a_z \geq -(1-c)$ and $[\langle AB\rangle_{Q} - c \langle AB\rangle_{L}]_{|b_z = a_z(1)} = 1-c$. It follows that $|\langle AB\rangle_{Q} - c \langle AB\rangle_{L}| \leq 1-c$, i.e. that~\eqref{constr.3} is satisfied.

The case $b_z < a_z(-1)$ can be analyzed in a very similar way. We thus conclude that when Alice and Bob's settings are such that $\chi=0$, but with $\bm{b} \notin \mathcal{H}_{\bm{a}}$, the BGS model gives a valid EPR2 decomposition for all states.

\section{Derivation of $\rho_{a}(t)$}\label{der.rho}
%\section{Derivation of $\rho_{\bm a}(t)$}\label{der.rho}

We consider the case $\chi=0$, and assume for now that $\bm{b}$ lies in the Hardy sector $\mathcal{H}_{\bm{a}}$ of $\bm{a}$.  
The correlator reads 
\begin{align}
\langle AB\rangle_{L|\bm{a},\bm{b}}^{\chi=0}=\int\limits_{-1}^1 dt \rho_{\bm a}(t) (1-|a_z(t)-b_z|).
\end{align}
To determine $\rho_{\bm a}(t)$ we will demand that condition \eqref{constr.3} is exactly saturated.

\begin{prop}
%If $\rho_{\bm a}(t)=(1+G(t) a_z) \frac{s \ln (\gamma) (\gamma^t+1)}{8 c\gamma^{\frac{t}{2}}}$, then $\langle AB\rangle_{Q|\bm{a},\bm{b}}^{\chi=0}-c\langle AB\rangle_{L|\bm{a},\bm{b}}^{\chi=0}=1-c$.
If $\rho_{\bm a}(t):= \frac{s\ln(\gamma)}{4c} \frac{1+G(t) a_z}{\sqrt{1-G(t)^2}}$, then $\langle AB\rangle_{Q|\bm{a},\bm{b}}^{\chi=0}-c\langle AB\rangle_{L|\bm{a},\bm{b}}^{\chi=0}=1-c$.
\end{prop}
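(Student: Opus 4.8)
The plan is to verify the claimed identity by direct computation, exploiting the fact that for $\chi = 0$ and $\bm{b} \in \mathcal{H}_{\bm{a}}$ one can write $\bm{b} = \bm{a}(t_b)$ for a unique $t_b \in [-1,1]$, so that $b_z = a_z(t_b)$. First I would note that, by the change of variables relating $t$ to $G(t)$, one has $dG/dt = \frac{\ln\gamma}{2}(1-G^2)$ (directly from $G(t) = (\gamma^t-1)/(\gamma^t+1)$), which is precisely what makes the weird-looking prefactor $\frac{s\ln\gamma}{4c}$ collapse to something manageable when integrating $\rho_{\bm a}(t)\,dt$. So I would substitute $g = G(t)$, turning $\rho_{\bm a}(t)\,dt$ into $\frac{s}{2c}\frac{1+g a_z}{\sqrt{1-g^2}}\cdot\frac{dg}{\ln\gamma}\cdot\frac{2}{1-g^2}$... wait, more carefully: $\rho_{\bm a}(t)\,dt = \frac{s\ln\gamma}{4c}\frac{1+ga_z}{\sqrt{1-g^2}}\cdot\frac{2\,dg}{\ln\gamma(1-g^2)} = \frac{s}{2c}\frac{1+ga_z}{(1-g^2)^{3/2}}\,dg$. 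I would carry this reduction through cleanly.

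Next I would split the correlator integral at $t = t_b$ (equivalently $g = G(t_b) =: g_b$, with $b_z = a_z(t_b) = \frac{a_z+g_b}{1+g_b a_z}$), according to the sign of $a_z(t) - b_z$: since $a_z(t)$ is monotonically increasing in $t$ (hence in $g$), we have $a_z(t) \le b_z$ for $g \le g_b$ and $\ge b_z$ for $g \ge g_b$. Thus
\begin{align}
\langle AB\rangle_L^{\chi=0} = \int_{-1}^{1} \rho_{\bm a}(t)\,dt - \int_{-1}^{g_b}\rho(1-a_z(t)+b_z) - \int_{g_b}^{1}\rho(1+a_z(t)-b_z), \nonumber
\end{align}
i.e. it equals (total mass) $- $ (something); more usefully, $1 - |a_z(t)-b_z| = 1 - b_z + a_z(t)$ on the lower piece and $1 + b_z - a_z(t)$ on the upper piece. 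I would compute the needed antiderivatives of $\frac{1+ga_z}{(1-g^2)^{3/2}}$ and of $\frac{(1+ga_z)\,a_z(t(g))}{(1-g^2)^{3/2}}$ with respect to $g$; the key simplification is that $a_z(t) = \frac{a_z+g}{1+g a_z}$, so $(1+ga_z)\,a_z(t) = a_z + g$, which makes the second integrand simply $\frac{(1+ga_z)(a_z+g)}{(1-g^2)^{3/2}}$ — a rational function times $(1-g^2)^{-3/2}$ that has an elementary antiderivative. These are standard integrals (they produce terms like $\frac{g}{\sqrt{1-g^2}}$ and $\frac{1}{\sqrt{1-g^2}}$ and constants), and one should use $s^2 = 1-c^2$, $\gamma = \frac{1+c}{1-c}$ so that $\frac{s^2}{1-?}$-type expressions reduce; also $G(\pm 1) = \pm G(1) = \pm\frac{\gamma-1}{\gamma+1} = \pm c$, so the endpoints $g = \pm 1$ are never reached — the integration is over $g \in [-c, c]$, which is important for convergence of $(1-g^2)^{-3/2}$.

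Then I would assemble: compute $\langle AB\rangle_{Q|\bm{a},\bm{b}}^{\chi=0}$ from \eqref{AB} with $\chi = 0$, namely $a_z b_z + s\, a_\perp b_\perp$ where $a_\perp b_\perp = \sqrt{(1-a_z^2)(1-b_z^2)}$; express $b_z$ in terms of $a_z$ and $g_b$ and simplify (the combination $1 - b_z^2 = \frac{(1-a_z^2)(1-g_b^2)}{(1+g_b a_z)^2}$ is the useful identity, giving $a_\perp b_\perp = \frac{(1-a_z^2)\sqrt{1-g_b^2}}{1+g_b a_z}$). Finally I would form $\langle AB\rangle_Q - c\langle AB\rangle_L$ as a function of $a_z$ and $g_b$ and check it collapses identically to $1-c$ — the $g_b$-dependence must cancel. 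The main obstacle is purely the bookkeeping of these elliptic-looking-but-actually-elementary integrals and the trigonometric/algebraic simplifications using $s^2 = (1-c)(1+c)$ and $\ln\gamma$ cancellations; there is no conceptual difficulty, but it is easy to drop a factor, so I would double-check by evaluating at a convenient special value (e.g. $\bm b = \bm a$, i.e. $g_b = 0$, or $\bm b = \bm a(1)$, i.e. $g_b = c$) where the answer $1-c$ can be confirmed more cheaply.
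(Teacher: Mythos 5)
Your plan is correct and would go through: I checked the assembly you deferred, and with your substitution $g=G(t)$ (so $\rho_{\bm a}(t)\,dt=\tfrac{s}{2c}\tfrac{1+ga_z}{(1-g^2)^{3/2}}\,dg$ over $g\in[-c,c]$), the two antiderivatives $\int\!\tfrac{dg}{(1-g^2)^{3/2}}=\tfrac{g}{\sqrt{1-g^2}}$ and $\int\!\tfrac{g\,dg}{(1-g^2)^{3/2}}=\tfrac{1}{\sqrt{1-g^2}}$, together with $(1+ga_z)a_z(t)=a_z+g$ and your identities $b_z-a_z=\tfrac{a_\perp^2 g_b}{1+g_ba_z}$, $1-a_zb_z=\tfrac{a_\perp^2}{1+g_ba_z}$, $a_\perp b_\perp=\tfrac{a_\perp^2\sqrt{1-g_b^2}}{1+g_ba_z}$, one finds $\langle AB\rangle_L=1+\tfrac{s}{c}a_\perp b_\perp-\tfrac{1-a_zb_z}{c}$, whence $\langle AB\rangle_Q-c\langle AB\rangle_L=1-c$ identically in $g_b$. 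Your route differs from the paper's in logical direction: the paper \emph{derives} $\rho_{\bm a}$ by positing the ansatz $\rho_{\bm a}(t)=[1+G(t)a_z]\rho(t)$ with $\rho$ even, imposing the side condition $\int_0^{\pm1}\rho\,G=\tfrac{1-s}{2c}$, reducing the saturation requirement to an integral equation with kernel $G(T)-G(t)$, and solving it (the same substitution $t\to G(t)$ linearizes the kernel), which additionally shows the solution is \emph{unique} and explains where the distribution comes from. Your forward verification proves exactly the ``if--then'' statement of the proposition, needs no parity argument or separate normalization lemma (both drop out of the explicit antiderivatives), and is arguably cleaner for that purpose, at the cost of not explaining why this $\rho_{\bm a}$ and no other works. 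One cosmetic caution: your displayed decomposition of $\langle AB\rangle_L$ mixes $t$-limits and $g$-limits and garbles the signs before you restate it correctly in words; in a written-up version keep the variables consistent, and note that $a_z(t)$ is strictly increasing only for $|a_z|<1$ (the case $|a_z|=1$ is a trivial check).
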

\begin{proof}
We will use the ansatz, for $t \in [-1,1]$,
\begin{align}\label{ansatz}
\rho_{\bm a}(t)= [1+G(t)a_z] \, \rho(t)
\end{align}
where $\rho(t)$ is a (normalized) even probability distribution, and we further assume that
\begin{align}
\int\limits_{0}^{\pm 1} dt \rho(t) G(t)&=\frac{1-s}{2c} \ \ \Big( =\frac{1}{2}G(\frac{1}{2}) \Big) \label{cond.1}.
\end{align}
The ansatz~\eqref{ansatz} ensures that $\rho_{\bm a}(t)$ is a well-defined probability distribution and that $\braket{A}_L=a_z$ as desired (which follow from the facts that $1+G(t)a_z \geq 0$ and that $\rho(t)G(t)$ is an odd function), while~\eqref{cond.1} ensures that the proposition is true for $\bm{a}=\bm{b}$. At the end, we will have to check that condition (\ref{cond.1}) holds and that $\rho(t)$ is indeed a probability distribution.

Since $\bm{b}$ is assumed to lie in $\mathcal{H}_{\bm{a}}$, $b_z$ can then be written as 
\begin{align}
b_z=a_z(T) \mbox{ for some }T\in [-1,1].
\end{align}
We have
\begin{align}
\langle AB\rangle_{L|\bm{a},\bm{b}}^{\chi=0}=& 1 - \int\limits_{-1}^1 dt \rho_{\bm a}(t) \left| a_z(t)-b_z\right| \\
=& 1 + \int_{-1}^T dt f(t) - \int_{T}^1 dt f(t)
\end{align}
with $f(t) = \rho_{\bm a}(t) \left[ a_z(t)-b_z\right] = \rho(t) \left[ a_z-b_z + (1-a_z b_z) G(t) \right]$. 
By decomposing the two integrals, we find
\begin{align}
\langle AB\rangle_{L|\bm{a},\bm{b}}^{\chi=0}=& 1 + \int_{-1}^0 dt f(t) + 2\int_{0}^T dt f(t) - \int_{0}^1 dt f(t) \nonumber \\
=& 1 - \int_{0}^1 dt [f(t) - f(-t)] + 2\int_{0}^T dt f(t) \nonumber \\
=& 1 - (1-a_z b_z)\frac{1-s}{c} + 2\int_{0}^T dt f(t),
\end{align}
where we made use of the parity of $\rho(t)$ and of assumption~\eqref{cond.1}.

Demanding that $\langle AB\rangle_Q^{\chi=0}-c\langle AB\rangle_L^{\chi=0}=1-c$ with
\begin{align}
\langle AB\rangle_Q^{\chi=0}=a_z b_z + s \, a_{\perp}b_{\perp}
\end{align}
leads to
\begin{align}\label{inteq}
2c \int\limits_{0}^T dt f(t) =s a_{\perp}b_{\perp}-s (1-a_z b_z).
\end{align}
Expressing now $b_z$ as $\frac{a_z+G(T)}{1+G(T) a_z}$, one gets
\begin{align}
a_{\perp}b_{\perp}&=\frac{a_{\perp}^2}{1+G(T)a_z}\sqrt{1-G(T)^2}, \nonumber  \\
a_z-b_z&=\frac{-a_{\perp}^2}{1+G(T)a_z}G(T), \nonumber \\
1-a_zb_z&=\frac{a_{\perp}^2}{1+G(T)a_z}.
\end{align}
Using these expressions, \eqref{inteq} leads to the integral equation
\begin{align}
\int_0^T dt \rho(t) [G(T)-G(t)]=\frac{s}{2c}\left(1-\sqrt{1-G(T)^2}\right),
\end{align}
which can be solved easily: e.g., after substituting 
\begin{align}
t\rightarrow G(t),
\end{align}
one gets an equation with a linear kernel. The unique solution reads:
\begin{align}
%\rho(t)=\frac{s \ln(\gamma)(\gamma^t+1)}{8c\gamma^{\frac{t}{2}}}
\rho(t)=\frac{s\ln(\gamma)}{4c} \frac{1}{\sqrt{1-G(t)^2}}.
\end{align}
As can be checked, $\rho(t)$ is an even, normalized probability distribution on the interval $[-1,1]$, and condition (\ref{cond.1}) holds.

\end{proof}

%The choice $\rho_{\bm a}(t)=(1+G(t) a_z) \frac{s \ln (\gamma) (\gamma^t+1)}{8 c\gamma^{\frac{t}{2}}}$ thus ensures 
The choice $\rho_{\bm a}(t):= \frac{s\ln(\gamma)}{4c} \frac{1+G(t) a_z}{\sqrt{1-G(t)^2}}$ thus ensures that our maximal EPR2 decomposition, with $p_L = p_{LC} = c$, is valid for all measurement directions $\bm{a}$ and $\bm{b}$ such that $\chi=0$, when $\bm{b}$ lies in the Hardy sector $\mathcal{H}_{\bm{a}}$ of $\bm{a}$. If $\chi=0$ but $\bm{b} \notin \mathcal{H}_{\bm{a}}$, our local model gives the same correlation as the BGS model, which we proved in Appendix~\ref{app.BGS} to also give a valid EPR2 decomposition. Note that if $\chi = \pi$, the situation is quite similar to $\chi = 0$, and one can prove again that our EPR2 decomposition is still valid.

The question now is whether our maximal EPR2 decomposition remains valid for all other settings, when $0<|\chi|<\pi$. The difficulty in checking that~\eqref{constr.3} indeed holds is that one needs to calculate $\langle AB\rangle_{L|\bm{a},\bm{b}}$, given by
\begin{align}
\langle AB\rangle_{L|\bm{a},\bm{b}} = \int_{-1}^1 dt \rho_{\bm a}(t) \, E_L^{BGS}(a_z(t),b_z,\chi), \label{EL_complete}
\end{align}
with
\begin{align}
E_L^{BGS}(a_z,&b_z,\chi) \nonumber \\
=&1-\frac{2|\chi|}{\pi}+\frac{2}{\pi}a_z \arctan \left(\frac{a_{\perp}b_z-a_zb_{\perp}\cos\chi}{b_{\perp}\sin|\chi |}\right)\nonumber \\
&+\frac{2}{\pi}b_z \arctan \left( \frac{a_zb_{\perp}-a_{\perp}b_z\cos\chi}{a_{\perp}\sin|\chi|}\right) \label{EL_BGS}
\end{align}
from the BGS model~\cite{branciard102}.

Unfortunately we were not able to calculate the integral~\eqref{EL_complete} explicitly. However, we carried intensive numerical checks to convince ourselves that~\eqref{constr.3} always holds (up to machine precision of absolute order $10^{-10}$), and therefore that our maximal EPR2 decomposition is valid for all possible measurement settings.

\section{Maximal violation of the CHSH inequality of the nonlocal part}\label{nonquantum}
The correlator of the nonlocal part $P_{NL}$ of our EPR2 decomposition is given by:
%\begin{align}
%E_{NL}(\bm{a},\bm{b})&:=\braket{AB}_{NL|\bm{a},\bm{b}} \nonumber \\
%&=\frac{\braket{AB}_{Q|\bm{a},\bm{b}}-c\braket{AB}_{L|\bm{a},\bm{b}}}{1-c} .
%\end{align}
\begin{align}
E_{NL}(\bm{a},\bm{b}):=\braket{AB}_{NL|\bm{a},\bm{b}} =\frac{\braket{AB}_{Q|\bm{a},\bm{b}}-c\braket{AB}_{L|\bm{a},\bm{b}}}{1-c} . \nonumber
\end{align}
\begin{figure}[htbp]
\centering
\epsfig{file=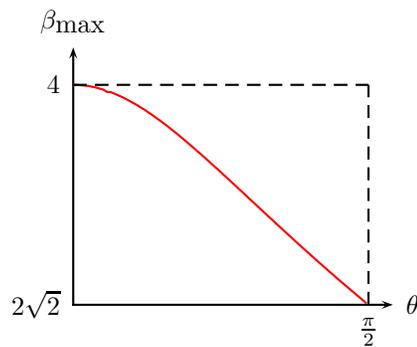}
\caption{Maximum violation $\beta_{\mbox{max}}$ of the CHSH inequality by the nonlocal part $P_{NL}$ of our EPR2 decomposition.}
\label{fig.nonquantum}
\end{figure}
We denote with $\beta_{\mbox{max}}$ the maximum violation of the CHSH inequality~\cite{clauser69}:
\begin{align}
\beta_{\mbox{max}} \, := \, \max_{\stackrel{\bm{a}_1,\bm{a}_2}{\bm{b}_1,\bm{b}_2}} \, \Big| & E_{NL}(\bm{a}_1,\bm{b}_1)+E_{NL}(\bm{a}_1,\bm{b}_2) \nonumber \\[-5mm]
& +E_{NL}(\bm{a}_2,\bm{b}_1)-E_{NL}(\bm{a}_2,\bm{b}_2)\Big| .
\end{align}
In figure \ref{fig.nonquantum}, the numerically determined values of $\beta_{\mbox{max}}$ are plotted for all states $\theta\in [0,\pi/2]$. All states with $\theta <\frac{\pi}{2}$ violate Tsirelson's bound $2\sqrt{2}$~\cite{tsirelson80}, approaching the algebraic maximum of 4 for $\theta\rightarrow 0$; this proves that the corresponding nonlocal parts $P_{NL}$ are nonquantum.

\end{document}